\newcommand{\muspace}{\mspace{1mu}}
\DeclareRobustCommand{\scond}{\mathchoice{\muspace\vert\muspace}{\vert}{\vert}{\vert}}
\DeclareRobustCommand{\discint}{\mathchoice{\mspace{-1.5mu}:\mspace{-1.5mu}}{\mspace{-1.5mu}:\mspace{-1.5mu}}{:}{:}}
\newcommand{\Qcal}{\mathcal{Q}}
\newcommand{\Scal}{\mathcal{S}}
\newcommand{\Ucal}{\mathcal{U}}
\newcommand{\Vcal}{\mathcal{V}}
\newcommand{\Xcal}{\mathcal{X}}
\newcommand{\Ycal}{\mathcal{Y}}
\newcommand{\Rr}{\mathscr{R}}
\newcommand{\pen}{{P_e^{(n)}}}
\newcommand{\aep}{{\mathcal{T}_{\epsilon}^{(n)}}}
\newcommand{\Mh}{{\hat{M}}}
\newcommand{\mh}{{\hat{m}}}
\newcommand{\Yt}{{\tilde{Y}}}
\newcommand{\yt}{{\tilde{y}}}
\def\a{\alpha}
\def\b{\beta}
\def\e{\epsilon}
\let\P\relax
\DeclareMathOperator\P{\textsf{P}}
\newcommand{\Bern}{\mathrm{Bern}}
\def\textiid{i.i.d.\@\xspace}
\newcommand\iid{\ifmmode\text{ i.i.d. } \else \textiid \fi}
\def\mathllap{\mathpalette\mathllapinternal}
\def\mathllapinternal#1#2{%
  \llap{$\mathsurround=0pt#1{#2}$}}
\def\clap#1{\hbox to 0pt{\hss#1\hss}}
\def\mathclap{\mathpalette\mathclapinternal}
\def\mathclapinternal#1#2{%
  \clap{$\mathsurround=0pt#1{#2}$}}
\let\oldstackrel\stackrel
\renewcommand{\stackrel}[2]{\oldstackrel{\mathclap{#1}}{#2}}
\renewcommand{\hbar}{h\mathllap{\overline{\vphantom{h}\hphantom{\rule{4.6pt}{0pt}}}\mspace{0.77mu}}}
\newcommand{\urltilde}{\kern -.06em\lower -.06em\hbox{~}\kern .02em}
\newtheorem{theorem}{\textbf{Theorem}}
\newtheorem{lemma}{\textbf{Lemma}}
\newtheorem{corollary}{\textbf{Corollary}}
\newtheorem{proposition}{\textbf{Proposition}}
\newtheorem{definition}{\textbf{Definition}}
\newtheorem{example}{\textbf{Example}}
\newtheorem{remark}{\textbf{Remark}}
\begin{document}

\title{A Note on Broadcast Channels with Stale State Information at the Transmitter}

\author{Hyeji Kim\IEEEauthorrefmark{1}, Yeow-Khiang Chia\IEEEauthorrefmark{2} and Abbas El Gamal\IEEEauthorrefmark{1}

\thanks{\IEEEauthorrefmark{1} Hyeji Kim and Abbas El Gamal are with the Department of Electrical Engineering, Stanford University (email: hyejikim@stanford.edu and abbas@ee.stanford.edu).}\thanks{\IEEEauthorrefmark{2} Yeow-Khiang Chia is with the Institute for Infocomm Research, Singapore (email: yeowkhiang@gmail.com).}
\thanks{ This work was partially supported by Air Force grant FA9550-10-1-0124.   }%
}

\maketitle
\begin{abstract}
This paper shows that the Maddah-Ali--Tse scheme which establishes the symmetric capacity of two example broadcast channels with strictly causal state information at the transmitter is a simple special case of the Shayevitz--Wigger scheme for the broadcast channel with generalized feedback, which involves block Markov coding, compression, superposition coding, Marton coding, and coded time sharing. Focusing on the class of symmetric broadcast channels with state, we derive an expression for the maximum achievable symmetric rate using the Shayevitz--Wigger scheme. We show that the Maddah-Ali--Tse results can be recovered by evaluating this expression for the special case in which superposition coding and Marton coding are not used. We then introduce a new broadcast channel example that shares many features of the Maddah-Ali--Tse examples. We show that another special case of our maximum symmetric rate expression in which superposition coding is also used attains a higher symmetric rate than the MAT scheme. The symmetric capacity of this example is not known, however. 
\end{abstract}

\section{Introduction}

It is well known that a broadcast channel with random state $p(y_1,y_2|x,s)p(s)$ when the state $S$ is known at the decoders can be viewed as a broadcast channel with the same input $X$ but with outputs $(Y_1,S)$ and $(Y_2,S)$~\cite[Chapter 7]{El-Gamal--Kim2010}. If the state is also known {\em strictly causally} at the encoder, i.e., the encoder at time $i$ knows $S^{i-1}$, then the setup can be viewed as a broadcast channel with outputs $(Y_1,S)$ and $(Y_2,S)$ and {\em causal feedback} of part of the outputs. Hence the broadcast channel with state known at the decoders and strictly casually at the encoder is intimately related to the broadcast channel with {\em generalized} feedback~\cite{Shayevitz--Wigger2013}, and it is expected that results for one of these two settings can be readily translated into results for the other.

Dueck was the first to show via an insightful example~\cite{Dueck1980} that feedback can enlarge the capacity region of the broadcast channel. The key idea in Dueck's example is for the encoder to broadcast past common information about the channel obtained through feedback. Even though the channel is memoryless, knowledge of this {\em stale} common information at the decoders helps recover previous messages at a higher rate than without feedback. This key idea has inspired the block Markov coding scheme for the broadcast channel with generalized feedback by Shayevitz and Wigger~\cite{Shayevitz--Wigger2013}. In their scheme, new messages are sent in each transmission block together with {\em refinement} information about the previous messages based on the channel information obtained through feedback. The refinement information is obtained by compressing the previous codewords in a manner similar to the Gray-Wyner system with side information~\cite{Gray--Wyner1974}.  The encoder uses Marton coding, superposition coding, and coded time sharing to encode the messages and the refinement information. Decoding is performed backwards with the refinement information decoded in a block used to decode the messages and the refinement information sent in the previous block.

In a separate line of investigation motivated by fading broadcast channels and network coding, Maddah-Ali and Tse~\cite{Ali--Tse2010} demonstrated via two beautiful examples that strictly causal (stale) state information at the encoder can enlarge the capacity region of the broadcast channel with state when the state is also known at the decoders. In their scheme, which establishes the symmetric capacity for these two examples, transmission is performed over three blocks. In the first block, the message intended for the first receiver is sent at a rate higher than what it can reliably decode. In the second block, the message for the second receiver is sent again at a rate higher than what it can decode. In the third block, refinement information about the messages the depends on the state information from the first two blocks is sent to both receivers to enable them to decode their respective messages. 

In this paper, we show that the Maddah-Ali--Tse (MAT) scheme is a simple special case of a straightforward adaptation of the Shayevitz--Wigger scheme. 
We consider a class of symmetric broadcast channels with state and derive an expression for the maximum symmetric rate achieved using the Shayevitz--Wigger scheme. We then specialize our result to the subclass of symmetric \emph{deterministic} broadcast channels with state that includes the Maddah-Ali--Tse examples as special cases. We consider the special case of Shayevitz--Wigger scheme
in which superposition coding and Marton coding are not used, henceforth referred to as the {\em time-sharing scheme}, and specialize our expression of the maximum symmetric rate to this case. We show that the maximum symmetric rate for this time-sharing scheme is optimal for the Maddah-Ali--Tse examples and is in fact a simple extension of their scheme. Observing that in both of the Maddah-Ali--Tse examples the channel is deterministic for each state (in addition to being symmetric), we investigate the question of whether the time-sharing scheme is optimal for all such deterministic channels. We construct a new example in which the channel switches between a Blackwell broadcast channel~\cite{blackwell} and a skew symmetric version of it, and show that another special scheme that includes superposition coding, henceforth referred to as the {\em superposition coding scheme}, achieves a higher symmetric rate than the time-sharing scheme. We do not know, however, if the Shayevitz--Wigger scheme in its full generality is optimal for this channel, or for the aforementioned deterministic class in general.

The rest of the paper is organized as follows. In the following section, we provide the needed definitions. In Section~\ref{sect:3}, we adapt the Shayevitz--Wigger scheme to the broadcast channel with stale state information and derive an expression for the maximum achievable symmetric rate when the channel is symmetric. 
In Section~\ref{sect:4}, we specialize this expression to the time-sharing scheme 
for the symmetric deterministic channels and evaluate the expression to show that the time-sharing scheme is optimal for the Maddah-Ali--Tse examples. In Section~\ref{sect:5}, we specialize our maximum symmetric rate expression to the superposition coding scheme 
for symmetric deterministic channels and introduce the Blackwell broadcast channel with state example. We show that the maximum symmetric rate using the superposition coding scheme is strictly higher than using the time-sharing scheme. We also obtain an upper bound on the symmetric capacity for this example. 


\section{Definitions} \label{sect:2}

A 2-receiver DM-BC with generalized feedback consists of an input alphabet $\Xcal$, two output alphabets $(\Ycal_1,\Ycal_2)$, a feedback alphabet $\mathcal{\tilde{Y}}$, and a conditional pmf $p(y_1,y_2,\tilde{y}|x).$ A $(2^{nR_1}, 2^{nR_2},n)$ code for the DM-BC with generalized feedback consists of (i) two message sets $[1:2^{nR_1}]$ and $[1:2^{nR_2}]$; (ii) an encoder that assigns a symbol $x_i(m_1, m_2, \yt^{i-1})$ to each message tuple $(m_1, m_2) \in [1:2^{nR_1}]\times [1:2^{nR_2}]$ and received sequence $\yt^{i-1}$ for $i \in [1:n]$, and (iii) two decoders. Decoder 1 assigns an estimate $\mh_1 \in [1:2^{nR_1}]$ or an error message $e$ to each received sequence $y_1^n$. Decoder 2 assigns $\mh_2 \in [1:2^{nR_2}]$ or an error message $e$ to each received sequence $y_2^n$.

A 2-receiver DM-BC with random state consists of an input alphabet $\Xcal$, two output alphabets $(\Ycal_1,\Ycal_2)$, a discrete memoryless state $S \sim p(s)$, and a conditional pmf $p(y_1, y_2|x,s)$. We consider the case in which the decoders know the state and the encoder knows the state strictly causally (or {\em stale} state in short).  A $(2^{nR_1}, 2^{nR_2},n)$ code for this setup consists of (i) two message sets $[1:2^{nR_1}]$ and $[1:2^{nR_2}]$, (ii) an encoder that assigns a symbol $x_i(m_1, m_2, s^{i-1})$ to each message tuple $(m_1, m_2) \in [1:2^{nR_1}]\times [1:2^{nR_2}]$ and received sequence $s^{i-1}$ for $i \in [1:n]$, and (iii) two decoders. Decoder 1 assigns an estimate $\mh_1 \in [1:2^{nR_1}]$ or an error message $e$ to each received sequence $(y_1^n, s^n)$. Decoder 2 assigns $\mh_2 \in [1:2^{nR_2}]$ or an error message $e$ to each received sequence $(y_2^n, s^n)$.

For both setups, the probability of error is defined as
\begin{align*}
\pen = \P\{\Mh_1 \neq M_1 \text{ or } \Mh_2 \neq M_2\}.
\end{align*}
Similarly, in both cases, a rate tuple $(R_1, R_2)$ is said to be achievable if there exists a sequence of $(2^{nR_1}, 2^{nR_2},n)$ codes such that $P_e^{(n)} \to 0$ as $n \to \infty$. The capacity region is defined as the set of all achievable rate tuples.

\begin{remark}\label{rmk:1} \textnormal{From the above definitions, the latter setup can be viewed as a special case of the former. To see this, let $(X,Y_1,Y_2, \Yt)$ be the random variables associated with the first setup, and $(X',Y_1',Y_2', S)$ be the random variables associated with the second setup. Then set $X = X'$, $Y_1 = (Y_1', S)$, $Y_2 = (Y_2', S)$, and $\Yt = S$. Under this mapping, any coding scheme for the latter case is also a coding scheme for the former case.}
\end{remark}

This paper will focus on the following special classes of channels.

\begin{definition}[Symmetric 2-receiver DM-BC with random state]\label{def:1} \textnormal{A 2-receiver DM-BC with random state is said to be symmetric if $\Ycal_1=\Ycal_2= \Ycal$, $\mathcal{S}=\{1, \dots, |\mathcal{S}|\}$,
and there exists a bijective function $\pi: \mathcal{S} \to \mathcal{S}$ such that
\begin{align*}
p_S(s)&=p_S(\pi(s)),\\
p_{Y_1|X,S}(y|x,s)&=p_{Y_2|X,S}(y|x,\pi(s)).
\end{align*}}
\end{definition}

\begin{definition}[Symmetric deterministic 2-receiver DM-BC with random state]\label{def:deterministic}\textnormal{A symmetric 2-receiver DM-BC with random state is said to be deterministic if the outputs are deterministic functions of the input and the state, i.e., $Y_1=y_1(X,S)$ and $Y_2=y_2(X,S)$.}\end{definition}
The examples in~\cite{Ali--Tse2010} and our new example in Section~\ref{sect:5} all belong to this class of symmetric deterministic DM-BC with random state.

\section{Maximum symmetric rate for Shayevitz--Wigger scheme}\label{Region} \label{sect:3}

Consider the Shayevitz--Wigger~\cite{Shayevitz--Wigger2013} achievable rate region for the 2-receiver DM-BC with generalized feedback.
\begin{theorem}\label{thm:1}
A rate pair  $(R_1,R_2)$ is achievable for the 2-receiver DM-BC with generalized feedback if it satisfies the following inequalities
\begin{align}
R_1 \leq& I(U_0,U_1;Y_1,V_1|Q)-I(U_0,U_1,U_2,\tilde{Y};V_0,V_1|Q,Y_1),\nonumber\\
R_2 \leq& I(U_0,U_2;Y_2,V_2|Q)-I(U_0,U_1,U_2,\tilde{Y};V_0,V_2|Q,Y_2),\nonumber\\
R_1+R_2 \leq& I(U_1;Y_1,V_1|Q,U_0)+I(U_2;Y_2,V_2|Q,U_0) + \min_{i \in \{1,2\}}I(U_0;Y_i,V_i|Q)-I(U_1;U_2|Q,U_0)\nonumber\\
& -I(U_0,U_1,U_2,\tilde{Y};V_1|Q,V_0,Y_1)-I(U_0,U_1,U_2,\tilde{Y};V_2|Q,V_0,Y_2)-\max_{i \in \{1,2\}}I(U_0,U_1,U_2,\tilde{Y};V_0|Q,Y_i),\nonumber\\
R_1+R_2 \leq& I(U_0,U_1;Y_1,V_1|Q)+I(U_0,U_2;Y_2,V_2|Q)-I(U_1;U_2|Q,U_0)\nonumber\\
& -I(U_0,U_1,U_2,\tilde{Y};V_0,V_1|Q,Y_1)-I(U_0,U_1,U_2,\tilde{Y};V_0,V_2|Q,Y_2)\nonumber
\end{align}
for some function $x(u_0, u_1, u_2,q)$ and pmf
\[
p(q)p(u_0,u_1, u_2| q)\mathbf{1}_{x=x(u_0,u_1,u_2,q)}p(\yt|x,y_1,y_2,q)p(v_0,v_1,v_2|u_0,u_1, u_2, \yt,q).
\]
\end{theorem}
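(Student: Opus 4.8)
The plan is to establish achievability via the block-Markov coding scheme of Shayevitz and Wigger, transmitting over $B$ blocks and decoding backwards, so that in each block the encoder superimposes fresh message information on a compressed description of the previous block's transmission and feedback. First I would fix the pmf in the statement and draw a time-sharing sequence $Q^n$ i.i.d.\ from $p(q)$, shared by all parties. For each block $j \in [1:B]$ I would generate a superposition codebook: cloud centers $U_0^n$ indexed by a common message/description index, and on top of each $U_0^n$ a Marton code consisting of two bins of satellite sequences $U_1^n$ and $U_2^n$, each drawn conditionally i.i.d.\ given $U_0^n$, with bin sizes chosen so that the mutual covering lemma guarantees a jointly typical pair $(U_1^n,U_2^n)$; this is the source of the Marton penalty $I(U_1;U_2\cond Q,U_0)$. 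The channel input in block $j$ is then $x_i = x(u_{0i},u_{1i},u_{2i},q_i)$.

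The compression layer is the technical core. After block $j$ the encoder has the feedback $\Yt^n$ together with its own $U_0^n,U_1^n,U_2^n$, and it describes this tuple to the two receivers in a Gray--Wyner-with-side-information fashion: I would generate covering codebooks for a common description $V_0^n$ and private refinements $V_1^n,V_2^n$, drawn according to $p(v_0,v_1,v_2\cond u_0,u_1,u_2,\yt,q)$, and apply the covering lemma to ensure that a jointly typical description $(V_0^n,V_1^n,V_2^n)$ of $(U_0^n,U_1^n,U_2^n,\Yt^n)$ exists with high probability. The chosen description indices are then folded into the message indices carried by the superposition/Marton code of block $j+1$, which is precisely what couples the compression rates to the transmission rates.

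Decoding proceeds backwards from block $B$. Having already recovered from block $j+1$ the description indices destined for block $j$, receiver $1$ reconstructs $V_1^n$ (and the common $V_0^n$) and uses $(Y_1^n,V_1^n)$ as effective output and side information to decode $(U_0^n,U_1^n)$ in block $j$ via the packing lemma; symmetrically for receiver $2$. This yields the positive terms $I(U_0,U_1;Y_1,V_1\cond Q)$ and $I(U_0,U_2;Y_2,V_2\cond Q)$, while the binning overhead needed to convey the descriptions reliably subtracts terms such as $I(U_0,U_1,U_2,\Yt;V_0,V_1\cond Q,Y_1)$. Collecting the covering constraints (compression feasibility), the packing constraints (reliable message and description decoding at each receiver), and the mutual-covering constraint (Marton), and then eliminating the auxiliary message-split and binning rates by Fourier--Motzkin, would produce exactly the four stated inequalities, with the $\min$ over $i$ arising because the common index $U_0$ must be decoded by both receivers and hence is limited by the weaker one, and the $\max$ over $i$ arising from the worst-case cost of conveying the common description $V_0$.

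The main obstacle I anticipate is bookkeeping the joint binning of $(V_0,V_1,V_2)$ so that the covering step and the two receivers' decoding steps are simultaneously feasible, and then carrying out the Fourier--Motzkin elimination without error: there are several coupled rate variables (message splits, Marton bin rates, and the three compression/binning rates), and the nontrivial part is verifying that after elimination the residual region collapses precisely to the claimed expression rather than to a looser or tighter one. A secondary subtlety is the backward-decoding dependency across blocks, which requires a careful induction on the block index together with the standard argument that the per-block error probability vanishes and that the rate loss incurred in the terminal block is negligible as $B \to \infty$.
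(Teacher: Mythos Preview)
Your sketch is essentially correct and matches the Shayevitz--Wigger block-Markov scheme (superposition plus Marton coding, Gray--Wyner compression of the previous block, backward decoding, Fourier--Motzkin elimination). Note, however, that the paper does not actually prove Theorem~\ref{thm:1}: it is quoted verbatim from~\cite{Shayevitz--Wigger2013}, and the paper only provides a brief outline of the corresponding scheme when deriving Corollary~\ref{coro:1}; that outline is consistent with your proposal.
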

The following is a simple corollary of the above theorem.

\begin{corollary} \label{coro:1}
A rate pair $(R_1,R_2)$ is achievable for the 2-receiver DM-BC with random state when the state is known at the decoders and strictly causally known at the encoder if it satisfies the following inequalities
\begin{align}
R_1 \leq& I(U_0,U_1;Y_1,V_1|Q,S)-I(U_0,U_1,U_2;V_0,V_1|Q,Y_1,S), \label{bound1}\\
R_2 \leq& I(U_0,U_2;Y_2,V_2|Q,S)-I(U_0,U_1,U_2;V_0,V_2|Q,Y_2,S), \label{bound2}\\
R_1+R_2 \leq& I(U_1;Y_1,V_1|Q,U_0,S)+I(U_2;Y_2,V_2|Q,U_0,S) + \min_{i \in \{1,2\}}I(U_0;Y_i,V_i|Q,S)\nonumber\\
& -I(U_1;U_2|Q,U_0)-I(U_0,U_1,U_2;V_1|Q,V_0,Y_1,S)-I(U_0,U_1,U_2;V_2|Q,V_0,Y_2,S)\nonumber\\
& -\max_{i \in \{1,2\}}I(U_0,U_1,U_2;V_0|Q,Y_i,S),\label{bound3}\\
R_1+R_2 \leq& I(U_0,U_1;Y_1,V_1|Q,S)+I(U_0,U_2;Y_2,V_2|Q,S)-I(U_1;U_2|Q,U_0)\nonumber\\
& -I(U_0,U_1,U_2;V_0,V_1|Q,Y_1,S)-I(U_0,U_1,U_2;V_0,V_2|Q,Y_2,S)\label{bound4}
\end{align}
for some function $x(u_0, u_1, u_2,q)$ and pmf
$p(q)p(u_0,u_1, u_2| q)\mathbf{1}_{x=x(u_0,u_1,u_2,q)}p(s)p(v_0,v_1,v_2|u_0,u_1, u_2, s,q)$. 
\end{corollary}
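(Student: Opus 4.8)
The plan is to obtain Corollary~\ref{coro:1} directly from Theorem~\ref{thm:1} by invoking the identification in Remark~\ref{rmk:1}. Starting from the generalized-feedback variables $(X,Y_1,Y_2,\Yt)$, I set $X=X'$, $Y_1=(Y_1',S)$, $Y_2=(Y_2',S)$, and $\Yt=S$, and then drop the primes. Under this substitution every output $Y_i$ appearing in Theorem~\ref{thm:1} is replaced by the pair $(Y_i,S)$ and every feedback symbol $\Yt$ is replaced by $S$. The induced joint distribution factors as $p(q)p(u_0,u_1,u_2|q)\mathbf{1}_{x=x(u_0,u_1,u_2,q)}p(s)p(v_0,v_1,v_2|u_0,u_1,u_2,s,q)$, which is exactly the pmf in the corollary: the factor $p(s)$ appears because, under the identification, the state component of the output is generated by the channel independently of the input, reflecting that the encoder uses the state only strictly causally, and $p(v_0,v_1,v_2|u_0,u_1,u_2,\Yt,q)$ becomes $p(v_0,v_1,v_2|u_0,u_1,u_2,s,q)$. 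So the first step is simply to record this substitution and confirm that the pmf matches.

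The remaining work is to simplify each mutual-information term using two elementary identities. First, since $(Q,U_0,U_1,U_2,X)$ is independent of $S$ (the factorization above isolates $p(s)$ as a separate factor, so marginalizing $v_0,v_1,v_2$ leaves $(Q,U_0,U_1,U_2,X)\perp S$), any term of the form $I(A;Y_i,V_i|C)$ with $A$ and $C$ built from $\{U_0,U_1,U_2,Q\}$ becomes $I(A;Y_i,S,V_i|C)$, and the chain rule together with $I(A;S|C)=0$ collapses it to $I(A;Y_i,V_i|C,S)$; this handles every term carrying an output $Y_i$ but not the feedback symbol, including the $\min_i I(U_0;Y_i,V_i|Q)$ term. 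Second, every term carrying the feedback symbol, namely $I(U_0,U_1,U_2,\Yt;\cdot\,|Q,Y_i)$ and $I(U_0,U_1,U_2,\Yt;\cdot\,|Q,V_0,Y_i)$ (and the $\max_i$ term), becomes $I(U_0,U_1,U_2,S;\cdot\,|Q,Y_i,S)$; since $S$ now appears in the conditioning, it is redundant in the information slot and is deleted, yielding $I(U_0,U_1,U_2;\cdot\,|Q,Y_i,S)$. Terms containing neither an output nor the feedback symbol, such as $I(U_1;U_2|Q,U_0)$, pass through unchanged. Applying these reductions term by term to all four inequalities of Theorem~\ref{thm:1} produces exactly~\eqref{bound1}--\eqref{bound4}.

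The step I expect to require the most care is verifying the independence $(Q,U_0,U_1,U_2,X)\perp S$ in the specialized model, since this is what both legitimizes the vanishing terms $I(A;S|C)=0$ and forces the factor $p(s)$ in the corollary's pmf. In the generalized-feedback model the channel outputs $(Y_1,Y_2,\Yt)$ depend on the auxiliaries only through $X$; under the identification the state component $S$ is drawn from $p(s)$ with no dependence on $x$, so $S$ is independent of $X$, and since $(Q,U_0,U_1,U_2)$ influence the outputs only through $X$, it follows that $(Q,U_0,U_1,U_2,X)\perp S$. Once this independence is in hand, the two reductions above are purely notational, so no new inequalities, auxiliary random variables, or achievability arguments are needed: the corollary is a direct specialization of Theorem~\ref{thm:1} rather than a separate proof.
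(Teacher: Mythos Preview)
Your proposal is correct and follows exactly the paper's approach: the paper states that ``This corollary follows immediately from Remark~\ref{rmk:1}'' and then, for self-containment, outlines the coding scheme rather than the reduction itself. You have simply made explicit the two routine simplifications (using $(Q,U_0,U_1,U_2,X)\perp S$ to move $S$ into the conditioning of the positive terms, and deleting the redundant $S=\Yt$ from the information slot of the negative terms) that the paper leaves implicit in the word ``immediately.''
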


This corollary follows immediately from Remark~\ref{rmk:1}. To be self contained, we give an outline of the coding scheme. The details follow the proof of Theorem~\ref{thm:1} in~\cite{Shayevitz--Wigger2013}. 

The Shayevitz--Wigger scheme uses a block Markov coding in which $b-1$ independent message pairs  $(M_{1,j}, M_{2,j})$ $ \in [1:2^{nR_1}] \times [1:2^{nR_2}]$, $j \in [1:b-1]$, are sent in $b$ $n$-transmission blocks. For simplicity, we describe the scheme only for $Q=\emptyset$ and do not detail the scheme for block $b$. 
\smallskip

\noindent \emph{Codebook generation.} Fix a pmf $p(u_0,u_1,u_2)p(v_0,v_1,v_2|u_0,u_1,u_2,s)$ and a function $x(u_0,u_1,u_2)$. For each block $j \in [1:b-1]$, we independently generate  a codebook for compression and transmission as follows.

\noindent \emph{Codebook generation for compression.}
    \begin{itemize}
    \item Randomly and independently generate $2^{n\tilde{r}_0}$ sequences $v_{0}^n(l_{0,j-1})$, $l_{0,j-1} \in [1:2^{n\tilde{r}_0}]$, each according to $\prod_{t=1}^{n} p_{V_0}(v_{0t})$. Partition the sequences into $2^{nr_{00}}$ equal size superbins indexed by $k_{00,j-1} \in [1:2^{nr_{00}}]$. Further partition each superbin into two subbins, one indexed by $k_{10,j-1} \in [1:2^{nr_{10}}]$ and the other indexed by $k_{20,j-1} \in [1:2^{nr_{20}}]$. 
    \item For each encoder $i\in \{1,2\}$, randomly and independently generate $2^{n\tilde{r}_i}$ codewords $v_i^n(l_{i,j-1})$, $l_{i,j-1} \in [1:2^{n\tilde{r}_i}]$, each according to  $\prod_{t=1}^{n} p_{V_i}(v_{it})$. Partition each set of sequences into $2^{nr_i}$ equal size bins indexed by $k_{i,j-1} \in [1:2^{nr_i}].$
\end{itemize}

\noindent \emph{Codebook generation for transmission. } For each block $j \in [1:b-1],$ we send the refinement (compression) messages together with new messages, i.e., the tuple $(W_{0,j},W_{1,j},W_{2,j})=(K_{00,j-1},(K_{10,j-1}, K_{1,j-1},M_{1,j}),(K_{20,j-1},$ $K_{2,j-1}, M_{2,j}))$. To do so, we use superposition coding and Marton coding to generate the sequence triple $(u_0^n(w_{0,j},$ $w_{1,j},w_{2,j}),$ $ u_1^n(w_{0,j},w_{1,j},w_{2,j}), u_2^n(w_{0,j},w_{1,j},w_{2,j}))\in \aep$ (see~\cite[Chapter 8]{El-Gamal--Kim2010} for details of Marton codebook generation).  
\smallskip

Encoding and decoding are described with the help of Table~\ref{table:gs}.

\begin{table}[ht]
\caption{Coding scheme for Corollary~\ref{coro:1}.}
\begin{tabular}{c| cc }
\textrm{Block} & 1 & 2\\
\hline\rule{0pt}{15pt}%
	&$\empty$ &$v_0^n(l_{0,1}), v_1^n(l_{1,1}), v_2^n(l_{2,1})$\\[3pt]
$X$ &$(k_{00,0},k_{10,0},k_{20,0}),k_{1,0},k_{2,0}=(1,1,1),1,1$ &$(k_{00,1},k_{10,1},k_{20,1}),k_{1,1},k_{2,1}$\\[3pt]
 	&$u_0^n(w_{0,1},w_{1,1},w_{2,1}),u_1^n(w_{0,1},w_{1,1},w_{2,1}),u_2^n(w_{0,1},w_{1,1},w_{2,1})$ &$u_0^n(w_{0,2},w_{1,2},w_{2,2}), u_1^n(w_{0,2},w_{1,2},w_{2,2}), u_2^n(w_{0,2},w_{1,2},w_{2,2})$\\[3pt]
	&$x^n(u_0^n,u_1^n,u_2^n)$ &$x^n(u_0^n,u_1^n,u_2^n)$\\\\
$Y_1$ &$\hat{m}_{1,1}$ &$\leftarrow (\hat{l}_{1,1}, \hat{k}_{00,1},\hat{k}_{10,1}, \hat{k}_{1,1}), \hat{m}_{1,2}$ \\\\
$Y_2$ &$\hat{m}_{2,1}$ &$\leftarrow (\hat{l}_{2,1},\hat{k}_{00,1},\hat{k}_{20,1}, \hat{k}_{2,1}), \hat{m}_{2,2}$ \\[5pt]
\hline
\end{tabular}
\label{table:gs}
\newline\newline\\
\begin{tabular}{c| ccc}
\textrm{Block} &$\cdots$ &j & $\cdots$ \\
\hline\rule{0pt}{15pt}
	&$\cdots$ &$v_0^n(l_{0,j-1}), v_1^n(l_{1,j-1}), v_2^n(l_{2,j-1})$ & $\cdots$ \\[3pt]
$X$ &$\cdots$ &$(k_{00,j-1},k_{10,j-1},k_{20,j-1}), k_{1,j-1},k_{2,j-1}$& $\cdots$ \\[3pt]
	&$\cdots$ &$u_0^n(w_{0,j},w_{1,j},w_{2,j}), u_1^n(w_{0,j},w_{1,j},w_{2,j}), u_2^n(w_{0,j},w_{1,j},w_{2,j})$ & $\cdots$ \\[3pt]
	&$\cdots$ &$x^n(u_0^n,u_1^n,u_2^n)$ & $\cdots$ \\\\
$Y_1$ &$\cdots$ &$\leftarrow (\hat{l}_{1,j-1}, \hat{k}_{00,j-1},\hat{k}_{10,j-1}, \hat{k}_{1,j-1}), \hat{m}_{1,j}$& $\cdots$ \\\\
$Y_2$ &$\cdots$ &$\leftarrow (\hat{l}_{2,j-1}, \hat{k}_{00,j-1},\hat{k}_{20,j-1}, \hat{k}_{2,j-1}), \hat{m}_{2,j}$& $\cdots$ \\[5pt]
\hline
\end{tabular}
\end{table}

\noindent \emph{Encoding}. In block $j\in [1:b-1]$, the encoder given $s^{n}_{j-1}$ first generates a refinement message tuple $(k_{00,j-1},k_{10,j-1},$\\
$k_{20,j-1}, k_{1,j-1},k_{2,j-1})$ using joint typicality encoding to find $(l_{0,j-1},l_{1,j-1},l_{2,j-1})$ such that 
$ (v_{0}^n(l_{0,j-1}),v_{i}^n(l_{i,j-1}),$ $u_0^n(w_{0,j-1},w_{1,j-1},w_{2,j-1}),u_1^n(w_{0,j-1},w_{1,j-1},w_{2,j-1}),u_2^n(w_{0,j-1},w_{1,j-1},w_{2,j-1}), q_{j-1}^n,s_{j-1}^n) \in \aep$ for $i \in \{1,2\}$ and the $k$'s are the bin indices for the $l$'s. The encoder then finds $(u_0^n(w_{0,j},w_{1,j},w_{2,j}), u_1^n(w_{0,j},w_{1,j},w_{2,j}), $ $u_2^n(w_{0,j},$ $w_{1,j},w_{2,j}))$ and transmits $x(u_{0t}(w_{0,j},w_{1,j},w_{2,j}), u_{1t}(w_{0,j},w_{1,j},w_{2,j}), u_{2t}(w_{0,j},w_{1,j},w_{2,j}))$ at time $t\in [1:n]$, where $k_{00,0}=k_{10,0}=k_{20,0}=k_{1,0}=k_{2,0}=1$ by convention. 
 
\noindent \emph{Decoding}. The refinements and messages are decoded backward~\cite[Chapter 16]{El-Gamal--Kim2010} starting with block $b$ as described in~\cite{Shayevitz--Wigger2013}.\smallskip

For the rest of this paper, we consider only the symmetric rate for the 2-receiver symmetric DM-BC with random state defined in Section~\ref{sect:2}.

\begin{definition}[Maximum symmetric rate] \textnormal{Let $\Rr$ be the achievable rate region in Corollary~\ref{coro:1} and $R_\mathrm{sym}$ be the  {\em maximum symmetric rate} achievable with the scheme of Corollary~\ref{coro:1}, that is, the supremum of $R$ such that $(R,R) \in \Rr$. Also, let $R_\mathrm{sum}$ be the {\em maximum sum-rate}, that is, the supremum of $R_1+R_2$ such that $(R_1,R_2) \in \Rr$.}
\end{definition}
Because of the restriction to symmetric channels and their symmetric rates, we will need to deal only with auxiliary random variables and functions that satisfy the following. 

\begin{definition}[Symmetric auxiliary random variables] \textnormal{Assume without loss of generality that $\Ucal_1 = \Ucal_2=\Ucal$ and $\Vcal_1 = \Vcal_2=\Vcal$. A set of auxiliary random variables $(U_0,U_1,U_2,V_0,V_1,V_2,Q)$ and function $X=x(U_0,U_1,U_2,Q)$ is said to be symmetric for a symmetric 2-receiver DM-BC with random state  if $\Qcal=\{1, \dots, |\Qcal|\}$ and there exists a bijective function $\tilde{\pi}: \mathcal{Q} \to \mathcal{Q}$ such that
\begin{align*}
p_Q(q)&=p_Q(\tilde{\pi}(q)),\\
p(u_0,u_1,u_2|q)&=p(u_0,u_2,u_1|\tilde{\pi}(q)),\\
x(u_0,u_1,u_2,q)&=x(u_0,u_2,u_1,\tilde{\pi}(q)),\\
p(v_0,v_1,v_2|u_0,u_1,u_2,q,s)&=p(v_0,v_2,v_1|u_0,u_2,u_1,\tilde{\pi}(q),\pi(s))
\end{align*}}
\end{definition}
\noindent where $\pi(s)$ is as defined in Definition~\ref{def:1}.
For the symmetric 2-receiver DM-BC, the maximum symmetric rate achievable using the coding scheme of Corollary~\ref{coro:1} can be greatly simplified. To prove this result, we need the following lemma. 


\begin{lemma} \label{lem:1}
Suppose the sum-rate $R_\mathrm{sum}$ is achievable with a set of symmetric auxiliary random variables. Then, $R_\mathrm{sym}=0.5R_\mathrm{sum}$.
\end{lemma}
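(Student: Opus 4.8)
The plan is to prove the two inequalities $R_{\mathrm{sym}} \le 0.5\,R_{\mathrm{sum}}$ and $R_{\mathrm{sym}} \ge 0.5\,R_{\mathrm{sum}}$ separately. The first is immediate and requires no symmetry: if $(R,R) \in \Rr$ then $2R \le R_{\mathrm{sum}}$ by the definition of the maximum sum-rate, so $R \le 0.5\,R_{\mathrm{sum}}$, and taking the supremum over all such $R$ gives $R_{\mathrm{sym}} \le 0.5\,R_{\mathrm{sum}}$. The content of the lemma is the reverse inequality, which is where the hypothesis that $R_{\mathrm{sum}}$ is attained by a set of \emph{symmetric} auxiliary random variables enters.

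For the reverse inequality I would fix a symmetric tuple $(U_0,U_1,U_2,V_0,V_1,V_2,Q)$ and function $x$ that achieves the sum-rate $R_{\mathrm{sum}}$, and analyze the polytope of rate pairs that Corollary~\ref{coro:1} certifies for this single choice. The key step is to show that relabeling the receiver indices---simultaneously swapping the roles of $(U_1,V_1,Y_1)$ and $(U_2,V_2,Y_2)$ while applying the permutation $\tilde\pi$ to $Q$ and $\pi$ to $S$---leaves the joint pmf of $(Q,S,U_0,U_1,U_2,V_0,V_1,V_2,X,Y_1,Y_2)$ unchanged. This is exactly the content of the symmetric-auxiliary-random-variable definition together with the channel symmetry $p_{Y_1|X,S}(y|x,s)=p_{Y_2|X,S}(y|x,\pi(s))$ from Definition~\ref{def:1}. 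Granting this invariance, the right-hand side of \eqref{bound1} maps term by term onto the right-hand side of \eqref{bound2}, so the two individual bounds carry a common value $B$; likewise the sum-rate expressions in \eqref{bound3} and \eqref{bound4} are each invariant under the relabeling and hence are well-defined symmetric quantities, whose minimum I call $C$.

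The region certified for this fixed choice is then $\{(R_1,R_2): R_1 \le B,\ R_2 \le B,\ R_1+R_2 \le C\}$, whose maximum sum-rate equals $\min(2B,C)$; by hypothesis this equals $R_{\mathrm{sum}}$. It remains only to check that the symmetric point $(0.5\,R_{\mathrm{sum}},\,0.5\,R_{\mathrm{sum}})$ lies in this polytope. Since $R_{\mathrm{sum}}=\min(2B,C)\le 2B$ we have $0.5\,R_{\mathrm{sum}} \le B$, so both individual constraints hold, and since $R_{\mathrm{sum}}\le C$ the sum constraint holds as well. Hence $(0.5\,R_{\mathrm{sum}},\,0.5\,R_{\mathrm{sum}})$ is achievable with this choice, giving $R_{\mathrm{sym}} \ge 0.5\,R_{\mathrm{sum}}$ and, with the first part, completing the proof.

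I expect the only real work to be the bookkeeping in the key step: verifying, conditional-mutual-information by conditional-mutual-information, that each term is fixed by the relabeling. The subtlety is that the swap acts nontrivially on $Y_1,Y_2$ only through the channel, so the invariance of a term such as $I(U_0,U_1;Y_1,V_1|Q,S)$ must be justified by pushing the permutations $\tilde\pi$ and $\pi$ through the conditioning and invoking the channel symmetry, rather than by a purely syntactic index swap. The $\min_{i}$ and $\max_{i}$ terms in \eqref{bound3} are, by contrast, manifestly symmetric and need no separate argument.
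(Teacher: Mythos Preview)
Your proposal is correct and follows essentially the same approach as the paper's proof: establish $R_{\mathrm{sym}}\le 0.5\,R_{\mathrm{sum}}$ trivially, then use the symmetry hypothesis to show that the two individual-rate bounds \eqref{bound1}--\eqref{bound2} coincide (the paper calls their common value $R_m$, you call it $B$), and conclude that the symmetric point $(0.5\,R_{\mathrm{sum}},0.5\,R_{\mathrm{sum}})$ satisfies all four constraints. Your write-up is in fact a bit more explicit than the paper's---you spell out the sum-rate bound $C$ and the polytope structure, whereas the paper leaves the verification of \eqref{bound3}--\eqref{bound4} at the symmetric point implicit---but the underlying argument is the same.
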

\begin{IEEEproof} In general, $R_\mathrm{sum} \geq 2 R_\mathrm{sym}$. So we only need to show that if $R_\mathrm{sum}$ is achievable with symmetric auxiliaries, then the rate pair $(0.5R_\mathrm{sum},0.5R_\mathrm{sum})$ is achievable. Note that with symmetric auxiliaries and function, the individual bounds on $R_1$ and $R_2$ in~\eqref{bound1} and~\eqref{bound2} are the same, that is,
\begin{align*}
R_1 &\leq I(U_0,U_1;Y_1,V_1|Q,S)-I(U_0,U_1,U_2;V_0,V_1|Q,Y_1,S)=R_m,\\
R_2 &\leq I(U_0,U_2;Y_2,V_2|Q,S)-I(U_0,U_1,U_2;V_0,V_2|Q,Y_2,S)=R_m.
\end{align*}
Hence, $R_1+R_2 \leq 2R_m$. Since $R_\mathrm{sum}$ is achievable, it must satisfy
$R_\mathrm{sum} \leq 2 R_m$. Hence, $(0.5R_\mathrm{sum},0.5R_\mathrm{sum})$ is achievable, which implies that $R_\mathrm{sym} \ge 0.5 R_\mathrm{sum}$.
\end{IEEEproof}

We are now ready to establish the following simplified expression for the maximum symmetric rate.
\begin{theorem} \label{thm:2} The maximum achievable symmetric rate for the symmetric 2-receiver DM-BC with stale state using the coding scheme of Corollary~\ref{coro:1} is
\begin{align}
\begin{split}\label{thmeq:2}
R_\mathrm{sym}=&\max\min \{I(U_1;Y_1,V_1|Q_\mathrm{sym},U_0,S)+0.5I(U_0;Y_1,V_1|Q_\mathrm{sym},S)-0.5I(U_1;U_2|Q_\mathrm{sym},U_0)\\
&-I(U_0,U_1,U_2;V_1|Q_\mathrm{sym},V_0,Y_1,S)-0.5I(U_0,U_1,U_2;V_0|Q_\mathrm{sym},Y_1,S),\\ &I(U_0,U_1;Y_1,V_1|Q_\mathrm{sym},S)-0.5I(U_1;U_2|Q_\mathrm{sym},U_0)-I(U_0,U_1,U_2;V_0,V_1|Q_\mathrm{sym},Y_1,S)\},
\end{split}
\end{align}
where the maximization is over symmetric auxiliary random variables and functions satisfying the structure in Corollary~\ref{coro:1}.
\end{theorem}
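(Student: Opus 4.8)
The plan is to combine Lemma~\ref{lem:1} with a symmetrization argument and then collapse the four bounds of Corollary~\ref{coro:1} under the symmetry relations. By Lemma~\ref{lem:1}, once we know that the maximum sum-rate $R_\mathrm{sum}$ is attained by a set of symmetric auxiliaries, we have $R_\mathrm{sym}=0.5\,R_\mathrm{sum}$, so it suffices to (a) justify that restricting the maximization to symmetric auxiliary random variables is without loss of generality, and (b) evaluate $R_\mathrm{sum}$ over such auxiliaries, where the single-rate constraints turn out to be inactive.

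First I would argue the reduction to symmetric auxiliaries. Given any (possibly asymmetric) auxiliaries and function achieving a pair $(R_1,R_2)\in\Rr$, the channel symmetry of Definition~\ref{def:1} lets me construct a ``mirror'' assignment---relabel the state through $\pi$, interchange the two receivers, and swap $(U_1,V_1)\leftrightarrow(U_2,V_2)$---that again lies in $\Rr$ and achieves $(R_2,R_1)$. Since the region is convex (the time-sharing variable $Q$ can absorb a fair coin selecting between the original and the mirror), the midpoint $\bigl((R_1+R_2)/2,(R_1+R_2)/2\bigr)$ is achievable with auxiliaries that are symmetric in the required sense. Applying this to a sum-rate-optimal point verifies the hypothesis of Lemma~\ref{lem:1}, so that $R_\mathrm{sum}$ is indeed attained by symmetric auxiliaries and $R_\mathrm{sym}=0.5\,R_\mathrm{sum}$.

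Next I would collapse the bounds. Averaging over $(Q,S)$ and using $p_Q(q)=p_Q(\tilde\pi(q))$, $p_S(s)=p_S(\pi(s))$ together with the invariance of the joint law under the simultaneous swap, each term in \eqref{bound3} and \eqref{bound4} equals its mirror: $I(U_1;Y_1,V_1|Q,U_0,S)=I(U_2;Y_2,V_2|Q,U_0,S)$, the values $I(U_0;Y_1,V_1|Q,S)=I(U_0;Y_2,V_2|Q,S)$ so the $\min_i$ collapses to a single term, the two $V_1,V_2$ compression penalties coincide, and the $\max_i$ term collapses likewise. Hence \eqref{bound3} reads $R_1+R_2\le 2A$ and \eqref{bound4} reads $R_1+R_2\le 2B$, where $A$ and $B$ are exactly the first and second arguments of the min in \eqref{thmeq:2}. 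I would also observe that \eqref{bound4} rewrites as $R_1+R_2\le 2R_m-I(U_1;U_2|Q,U_0)\le 2R_m$, where $R_m$ is the common single-rate bound of Lemma~\ref{lem:1}; thus \eqref{bound1} and \eqref{bound2} (giving $R_1,R_2\le R_m$) are redundant. Therefore $R_\mathrm{sum}=\min(2A,2B)$, and halving gives $R_\mathrm{sym}=\min(A,B)$; maximizing over symmetric auxiliaries yields \eqref{thmeq:2}.

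The main obstacle I anticipate is the symmetrization step: verifying that the mirror assignment is a legitimate point of $\Rr$ requires checking that it respects the prescribed factorization $p(q)p(u_0,u_1,u_2|q)\mathbf{1}_{x=x(u_0,u_1,u_2,q)}p(s)p(v_0,v_1,v_2|u_0,u_1,u_2,s,q)$ and that, under the channel symmetry, the mirrored output statistics reproduce the four bounds with receivers $1$ and $2$ exchanged. The accompanying bookkeeping---confirming that each conditional mutual information coincides with its mirror after averaging over $(Q,S)$---is routine but must be handled carefully, since the conditioning sets mix $Q$, $S$, and the auxiliaries, and the symmetry map $(\tilde\pi,\pi)$ acts on all of them simultaneously.
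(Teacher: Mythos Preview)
Your proposal is correct and follows essentially the same route as the paper's proof: the paper also constructs the mirror auxiliaries via the state permutation $\pi$ together with the swap $(U_1,V_1)\leftrightarrow(U_2,V_2)$, time-shares with a fair coin to form $Q_\mathrm{sym}$, shows that the resulting sum-rate bound is at least the original $R_\mathrm{sum}$ (precisely the $\min/\max\to$ average step implicit in your convexity claim), invokes Lemma~\ref{lem:1}, and then collapses \eqref{bound3}--\eqref{bound4} using the same symmetry identities you describe. The only cosmetic difference is that the paper spells out the inequality $\mathrm{SR}\ge R_\mathrm{sum}$ term by term rather than phrasing it as convexity of $\Rr$.
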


The proof of this theorem is given in Appendix.

For the rest of this paper, we focus on symmetric \emph{deterministic} 2-receiver DM-BC with random state as defined in Section~\ref{sect:2}.

\section{Time-sharing scheme} \label{sect:4}
In this section, we show that the coding scheme in~\cite{Ali--Tse2010} is a special case of the scheme in~\cite{Shayevitz--Wigger2013} when adapted to the symmetric deterministic DM-BC with random state without superposition coding or Marton coding. We hence refer to this special case as the {\em time-sharing} scheme.
Specifically, we specialize the auxiliary random variables in Theorem~\ref{thm:2} as follows. Let $Q \in \{1,2,3\}$ and $p_Q(1)=p_Q(2)=p$, and $p_Q(3)=1-2 p$, $0 \le p \le 0.5$. Let $p(q,u_0,u_1,u_2)=p(q)p(u_0)p(u_1)p(u_2)$ and $p_{U_1}(u)=p_{U_2}(u)$. Define
\begin{align}
V_1=V_2=V_0&=
	\begin{cases}\label{ts:cond1}
         Y_2  & \text{if }Q=1, 	\\
	Y_1  & \text{if }Q=2, 	\\
	\emptyset  & \text{if }Q=3,
	\end{cases}\\
X&=  \begin{cases} \label{ts:condend}
	U_1  & \text{if } Q=1,\\
	U_2  & \text{if } Q=2,\\
	U_0  & \text{if } Q=3.
	\end{cases}
\end{align}
Denote the maximum symmetric rate achievable with the above auxiliary random variables identification by $R_\mathrm{sym-ts}$. We now specialize Theorem~\ref{thm:2} to establish the following simplified expression for this maximum symmetric rate.

\begin{proposition}\label{prop:1}
The maximum symmetric rate for the symmetric deterministic 2-receiver DM-BC with stale state using the time-sharing scheme is
\[
R_\mathrm{sym-ts}=\max_{p(x)} \frac{C_1 I(X;Y_1,Y_2|S)}{2C_1+I(X;Y_2|Y_1,S)},
\]
where $C_1 =\max_{p(x)} I(X;Y_1|S)$.
\end{proposition}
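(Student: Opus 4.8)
The plan is to substitute the time-sharing identification \eqref{ts:cond1}--\eqref{ts:condend} directly into the maximum symmetric rate expression \eqref{thmeq:2} of Theorem~\ref{thm:2}, simplify the two bracketed arguments of the $\min$ (call them $A$ and $B$) to closed forms, and then carry out the remaining maximization over the time-sharing parameter $p$ and the input distributions explicitly.

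First I would exploit the three structural features of the identification. Because $p(q,u_0,u_1,u_2)=p(q)p(u_0)p(u_1)p(u_2)$, the term $I(U_1;U_2|Q,U_0)$ vanishes. Because $V_0=V_1=V_2$, conditioning on $V_0$ fixes $V_1$, so every term of the form $I(\cdot\,;V_1|V_0,\dots)$ is zero. The remaining mutual informations I would evaluate by splitting on $Q\in\{1,2,3\}$: in each block $X$ equals one of $U_0,U_1,U_2$, the outputs $Y_1,Y_2$ are deterministic functions of $(X,S)$, and the auxiliaries are independent, so each conditional information collapses to $0$ or to a single-letter quantity in $X=U_1$ or $X=U_0$. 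The one identity worth isolating is that, for a deterministic channel, $I(U_0,U_1,U_2;Y_2|Y_1,S)=H(Y_2|Y_1,S)=I(X;Y_2|Y_1,S)$, since $Y_2$ is a function of $(X,S)$. Averaging over $Q$ with weights $(p,p,1-2p)$, and writing $a=I(X;Y_1,Y_2|S)$ and $b=I(X;Y_2|Y_1,S)$ for $X\sim p_{U_1}$ and $c=I(U_0;Y_1|S)$, I expect to obtain
\begin{align*}
A=pa+\tfrac12(1-2p)c-\tfrac12 pb, \qquad B=pa+(1-2p)c-pb,
\end{align*}
so that $R_\mathrm{sym\text{-}ts}=\max\min\{A,B\}$ over $p\in[0,0.5]$ and the two input distributions.

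The optimization is the heart of the argument. Since both $A$ and $B$ increase in $c$, it is optimal to take $c=C_1=\max_{p(x)}I(X;Y_1|S)$, realized by the free choice of $p_{U_0}$ (note this distribution is distinct from that of $U_1=U_2$, which controls $a$ and $b$). The two relations I would then use are the identity $2A=B+pa$, which shows $\min\{A,B\}$ is a concave piecewise-linear function of $p$ with its kink where $A=B$, and the chain rule $a=I(X;Y_1|S)+b$. Solving $A=B$ gives $p^\star=c/(b+2c)\in[0,0.5]$, at which $A=B=a\,p^\star=\tfrac{a\,C_1}{b+2C_1}$, exactly the claimed expression. It remains to maximize over the distribution of $U_1=U_2$; using $a-b=I(X;Y_1|S)\le C_1$ one checks that the right slope $B'=a-b-2C_1$ is always $\le 0$, so the only thing that can push the maximizer off the kink is the left slope $A'=I(X;Y_1|S)+\tfrac12 b-C_1$ turning negative, in which case the per-$p$ maximum drops to the endpoint value $\tfrac12 C_1$ at $p=0$.

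The main obstacle is precisely this endpoint/regime bookkeeping: I must show that the endpoints never beat the kink value at the overall optimum. The clean way is to evaluate the ratio at the distribution achieving $C_1$ (so $a-b=C_1$, hence $A'=\tfrac12 b\ge0$), which already yields $\tfrac{(C_1+b)C_1}{b+2C_1}\ge\tfrac12 C_1$; thus $\max_{p(x)}\tfrac{a\,C_1}{b+2C_1}\ge\tfrac12 C_1$. Since the ratio exceeds $\tfrac12 C_1$ exactly when $A'\ge0$, the unconstrained maximizer of the ratio automatically lies in the regime where the kink is the per-$p$ maximum, which reconciles the two maximizations and gives $R_\mathrm{sym\text{-}ts}=\max_{p(x)}\tfrac{C_1\,I(X;Y_1,Y_2|S)}{2C_1+I(X;Y_2|Y_1,S)}$. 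A secondary point to verify is that the symmetric-auxiliary reduction underlying Theorem~\ref{thm:2} (which already equated the two receivers) is consistent with the block weights $p_Q(1)=p_Q(2)=p$ under the swap $\tilde\pi:1\leftrightarrow2$, $3\mapsto3$.
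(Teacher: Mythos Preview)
Your approach is essentially the paper's: substitute \eqref{ts:cond1}--\eqref{ts:condend} into \eqref{thmeq:2}, reduce the two arguments of the $\min$ to linear functions of $p$ (your $A,B$ are exactly the paper's $L(p),R(p)$ after the chain-rule rewrite $a=I(X;Y_1|S)+b$), set $c=C_1$ by monotonicity, and locate the optimal $p^\star$ at the crossing $L(p^\star)=R(p^\star)$.

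The one substantive difference is that you are more careful than the paper on the piecewise-linear optimization. The paper simply asserts that, since $L(0)\le R(0)$ and $L(0.5)\ge R(0.5)$, the maximum of $\min\{L,R\}$ occurs at the crossing; but this is only guaranteed when the left branch $L$ is nondecreasing, i.e., when $A'=I(X;Y_1|S)+\tfrac12 b-C_1\ge0$. Your endpoint bookkeeping---evaluating the ratio at the $C_1$-achieving input to get $\ge\tfrac12 C_1$, and noting that the ratio $\ge\tfrac12 C_1$ iff $A'\ge0$---closes this small gap and confirms that the unconstrained maximizer of $\tfrac{aC_1}{b+2C_1}$ over $p(x)$ indeed lies in the regime where the crossing is optimal. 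This is a welcome addition of rigor, not a different route.
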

\begin{proof}
Substituting from \eqref{ts:cond1} and~\eqref{ts:condend} into~\eqref{thmeq:2}, we obtain
\begin{align}
\begin{split}\label{tssym1}
R_\mathrm{sym-ts}=\max_{p,U_0,U_1}\min\{&pI(U_1;Y_1|S)+(0.5-p)I(U_0;Y_1|S)+0.5pI(U_1;Y_2|Y_1,S),\\
&pI(U_1;Y_1|S)+(1-2p)I(U_0;Y_1|S)\}.
\end{split}
\end{align}

Now we find $p$ and $(U_0, U_1)$ that achieve~\eqref{tssym1}. Since $p \leq 0.5$, the $I(U_0;Y_1|S)$ terms in~\eqref{tssym1} are nonnegative, and without loss of optimality, we can set $U_0=\arg \max I(U_0;Y_1|S)$. Then,
\begin{align}\label{LRpq}
R_\mathrm{sym-ts}&=\max_{p,p(x)}\min\{L(p),R(p)\},
\end{align}
where 
\begin{align*}
L(p)&=pI(X;Y_1|S)+(0.5-p)C_1+0.5pI(X;Y_2|Y_1,S),\\
R(p)&=pI(X;Y_1|S)+(1-2p)C_1. 
\end{align*}

To find $p$ and $X$ that maximize the minimum of the two terms in~\eqref{LRpq}, we first fix $p(X)$ and find $p^*$ that maximizes the min of the two terms in \eqref{LRpq} in terms of $p(X)$. We then optimize $R_\mathrm{sym-ts}$ in $p(X)$.
With $p(X)$ fixed, both $L(p)$ and $R(p)$ are linear functions of $p$, and $L(0) \leq R(0)$ and $L(0.5) \geq R(0.5)$. Thus, $\min\{L(p),R(p)\}$ attains its maximum value at $p^*$ such that $L(p^*)=R(p^*)$, namely,
\begin{align}
p^*=\frac{C_1}{2C_1+I(X;Y_2|Y_1,S)}.\label{optpq}
\end{align}
Replacing $p^*$ in \eqref{LRpq} by \eqref{optpq} completes the proof.
\end{proof}

\begin{remark}\label{rmk:2} \textnormal{Although the Shayevitz--Wigger coding scheme, which achieves the maximum symmetric rate in~\eqref{tssym1}, uses block Markov coding, coded time sharing, and backward decoding, it is not difficult to see that it can be achieved also using the MAT scheme as illustrated in Table~\ref{table:ts}.}
\end{remark}
\begin{remark}\label{rmk:3} \textnormal{ If $\arg\max_{p(x)}I(X;Y_1|S)=\arg\max_{p(x)}I(X;Y_1,Y_2|S)$, then $R_\mathrm{sym-ts}$ can be simplified further to
\begin{align}\label{Rtssym,special}
R_\mathrm{sym-ts}=\frac{C_1C_{1+2}}{C_1+C_{1+2}}, \text{ where }C_1=\max_{p(x)} I(X;Y_1|S), C_{1+2}=\max_{p(x)} I(X;Y_1,Y_2|S),
\end{align}
and is achievable with  $U_1$ and $U_0$ each distributed according to $\arg\max_{p(x)} I(X;Y_1|S)$ and $p^*=C_1/(C_1+C_{1+2})$.}
\end{remark}
\begin{table}[ht]
\caption{Time-sharing scheme.}
\centering
\begin{tabular}{c| ccc}
\textrm{Sub-block} &1 &2 &3\\
\hline\rule{0pt}{15pt}
	 &$\empty$ &$\empty$ &$v_0^n(l_0){=}(y_{2,1}^{pn},y_{1,pn+1}^{2pn},\emptyset_{2pn+1}^n)$\\[3pt]
$X$  &$\empty$ &$\empty$ &$k_{00}$\\[3pt]
	 &$u_{1,1}^{pn}(m_{1})$ &$u_{2,pn+1}^{2pn}(m_{2})$ &$u_{0,2pn+1}^{n}(k_{00})$\\[3pt]
	 & $x_{1}^{pn}{=}u_{1,1}^{pn}$ &$x_{pn+1}^{2pn}{=}u_{2,pn+1}^{2pn}$ &$x_{2pn+1}^{n}{=}u_{0,2pn+1}^{n}$\\\\
$Y_1$     &&&$ \hat{l}_0, \hat{k}_{00}, \hat{m}_{1}$\\\\
$Y_2$     && &$ \hat{l}_0, \hat{k}_{00}, \hat{m}_{2}$\\[5pt]
\hline
\end{tabular}
\label{table:ts}
\end{table}

We now apply the time-sharing scheme to the two examples in~\cite{Ali--Tse2010}, which satisfy the additional condition in Remark~\ref{rmk:3}.

\begin{example}[Broadcast Erasure Channel~\cite{Georgiadis--Tassiulas2009,Ali--Tse2010}] \textnormal{Consider a DM-BC with random state with $X\in \{0,1\}$, $p(y_1,y_2|x)=p(y_1|x)p(y_2|x)$, where $Y_i=X$ with probability $1-\e$ and $Y_i= \mathrm{e}$ with probability $\e$ for $i=1,2$, and $S=(S_1,S_2)$, where $S_i=0$ if $Y_i=X$ and $S_i=1$ if $Y_i=\mathrm{e}$ for $i=1,2$.}

\textnormal{Now, to evaluate the maximum symmetric rate in~\eqref{Rtssym,special}, note that $C_1 = 1-\e$, $C_{1+2} = 1 - \e^2$. Then,
\begin{align}\label{Rtssym,erasure}
R_\mathrm{sym-ts}=\frac{1-\e^2}{2+\e}.
\end{align}
In~\cite{1683207}, an outer bound on the capacity region for this example was obtained based on the observation that this capacity region cannot be larger than that of the physically degraded broadcast channel with input $X$, outputs $Y_1,$ and $(Y_1,Y_2)$, and with causal feedback. Using the same technique, it was shown in~\cite{Ali--Tse2010} that the bound on the symmetric capacity coincides with~\eqref{Rtssym,erasure}.} 
\end{example}

\begin{example}[Finite Field Deterministic Channel~\cite{Ali--Tse2010}] \textnormal{Consider the DM-BC
\begin{align}
\begin{bmatrix}
Y_1\\
Y_2\end{bmatrix}= H X,\nonumber
\end{align}
where
\[
H=
    \begin{bmatrix}
    h_{11} & h_{12}\\
    h_{21} & h_{22}
    \end{bmatrix},\;
    X=
    \begin{bmatrix}
    X_1\\
    X_2
    \end{bmatrix},\;\text{and }
    S=H.
\]
Assume that  $H$ is chosen uniformly at random from the set of full-rank matrices over a finite field. Further assume that $|\Ycal_1|=|\Ycal_2|=|\Ycal|$.}

\textnormal{Now, to evaluate the maximum symmetric rate in~\eqref{Rtssym,special}, note that $C_1=\log|\Ycal|, C_{1+2}=2\log|\Ycal|.$ Then,
\begin{align}\label{Rtssym,finite}
R_\mathrm{sym-ts}=\frac{2\log|\Ycal|}{3}.
\end{align}}
\textnormal{Using the same converse technique as for Example 1, it was shown in~\cite{Ali--Tse2010} that~\eqref{Rtssym,finite} is the symmetric capacity of this channel.}
\end{example}

Note that in the above two examples, the channel is deterministic for each state. Is the time-sharing scheme then optimal for all such channels? The example in the following section shows that time-sharing scheme is not in general optimal for this class of channels.

\section{Superposition coding scheme}\label{sect:5}
In the time-sharing scheme, we separately transmit new message and their refinement. 
In this section, we consider another special case of the scheme in Corollary~\ref{coro:1} in which we also use superposition coding.
Specifically, we specialize the auxiliary random variables in Theorem~\ref{thm:2} as follows. Let  $Q \in \{1,2\}$ and $P_Q(1)=P_Q(2)=0.5.$ Let $p(q,u_0,u_1,u_2)=p(q)p(u_0)p(u_1|u_0)p(u_2|u_0)$ and $p_{U_1|U_0}(u|u_0)=p_{U_2|U_0}(u|u_0)$. Define
\begin{align}
V_1=V_2=V_0&= \label{sp:cond1}
	\begin{cases}
	Y_2 &\text{if }Q=1,\\
	Y_1 &\text{if }Q=2,\\
	\end{cases}\\
X&=
	\begin{cases}\label{sp:condend}
	U_1  &\text{if }Q=1,\\
	U_2  &\text{if }Q=2.\\
	\end{cases}
\end{align}

The superposition coding scheme is summarized in Table~\ref{table:sp}. 
\begin{table}[ht]
\caption{Superposition coding scheme.}
\vspace{5pt}
\begin{tabular}{c| cccc}
\textrm{Block} &\multicolumn{2}{c}{1} &\multicolumn{2}{c}{2}\\
\hline
\textrm{Sub-block} &1 &2 &1 &2\\
\hline\rule{0pt}{15pt}
  	 & \multicolumn{2}{c}{$\empty$} &\multicolumn{2}{c}{$v_0^n(l_{0,1})=(y_{2,1,1}^{n/2},y_{1,1,n/2+1}^n)$} \\[3pt]
$X$  & \multicolumn{2}{c}{$k_{00,0}=1$} &\multicolumn{2}{c}{$k_{00,1}$} \\[3pt]
	 & \multicolumn{2}{c}{$u_{0,1}^n(k_{00,0}),u_{1,1}^{n/2}(m_{1,1}|k_{00,0}), u_{2,n/2+1}^{n}(m_{2,1}|k_{00,0})$} & \multicolumn{2}{c}{$u_{0,1}^n(k_{00,1}),u_{1,1}^{n/2}(m_{1,2}|k_{00,1}), u_{2,n/2+1}^{n}(m_{2,2}|k_{00,1})$}\\[3pt]
	& $x_1^{n/2}{=}u_{1,1}^{n/2}$ & $x_{n/2+1}^{n}{=}u_{2,n/2+1}^{n}$ & $x_1^{n/2}{=}u_{1,1}^{n/2}$ & $x_{n/2+1}^{n}{=}u_{2,n/2+1}^{n}$\\\\
$Y_1$ & & $\hat{m}_{1,1}$ & &$\leftarrow (\hat{l}_{0,1}, \hat{k}_{00,1}), \hat{m}_{1,2}$\\\\
$Y_2$ & & $\hat{m}_{2,1}$ & & $\leftarrow (\hat{l}_{0,1}, \hat{k}_{00,1}), \hat{m}_{2,2}$\\[5pt]
\hline
\end{tabular}
\newline\newline\\
\begin{tabular}{c| c c c c}
\textrm{Block} &$\cdots$ &\multicolumn{2}{c}{j} &$\cdots$\\
\hline
\textrm{Sub-block} &$\cdots$  &1 &2 &$\cdots$\\
\hline\rule{0pt}{15pt}
 &$\cdots$ &\multicolumn{2}{c}{$v_0^n(l_{0,j-1})=(y_{2,j-1,1}^{n/2},y_{1,j-1,n/2+1}^n)$} &$\cdots$\\[3pt]
$X$  &$\cdots$ &\multicolumn{2}{c}{$k_{00,j-1}$} &$\cdots$\\[3pt]
	 &$\cdots$ &\multicolumn{2}{c}{$u_{0,1}^n(k_{00,j-1}),u_{1,1}^{n/2}(m_{1,j}|k_{00,j-1}), u_{2,n/2+1}^{n}(m_{2,j}|k_{00,j-1})$} &$\cdots$\\[3pt]
	 &$\cdots$ & $x_1^{n/2}{=}u_{1,1}^{n/2}$ & $x_{n/2+1}^{n}{=}u_{2,n/2+1}^{n}$ &$\cdots$\\\\
$Y_1$ &$\cdots$  && $\leftarrow (\hat{l}_{0,j-1}, \hat{k}_{00,j-1}), \hat{m}_{1,j}$ &$\cdots$\\\\
$Y_2$ &$\cdots$ & &$\leftarrow (\hat{l}_{0,j-1}, \hat{k}_{00,j-1}), \hat{m}_{2,j}$ &$\cdots$\\[5pt]
\hline
\end{tabular}
\label{table:sp}
\end{table}

Denote the maximum symmetric rate achievable with the above auxiliary random variables identification by $R_\mathrm{sym-sp}$. We now specialize Theorem~\ref{thm:2} to establish the following simplified expression for this maximum symmetric rate.

\begin{proposition}\label{prop:2}
The maximum achievable symmetric rate for the symmetric deterministic 2-receiver DM-BC with stale state using the superposition coding scheme is 
\[
 R_\mathrm{sym-sp}=\max_{p(u_0,x)}\min\{0.5I(X;Y_1|S)+0.25I(X;Y_2|Y_1,U_0,S),0.5I(X;Y_1|S)+0.5I(U_0;Y_1|S)\}.
 \]
\end{proposition}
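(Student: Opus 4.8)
The plan is to mirror the proof of Proposition~\ref{prop:1}: substitute the superposition-coding identification \eqref{sp:cond1}--\eqref{sp:condend} into the symmetric-rate expression \eqref{thmeq:2} of Theorem~\ref{thm:2} and reduce the two bracketed terms of the outer minimum to the two claimed expressions. Before doing so I would verify that the choice \eqref{sp:cond1}--\eqref{sp:condend} is in fact a symmetric auxiliary set (take $\tilde{\pi}$ to be the transposition of $\{1,2\}$ and invoke the channel symmetry of Definition~\ref{def:1}), so that Theorem~\ref{thm:2} applies and the outer maximization reduces to a maximization over the single free parameter $p(u_0,x)$, where $X$ plays the role of $U_1$ drawn from $p(u_1|u_0)$.

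The reduction rests on three structural facts. First, since $p(u_0,u_1,u_2)=p(u_0)p(u_1|u_0)p(u_2|u_0)$, the variables $U_1$ and $U_2$ are conditionally independent given $U_0$, so every $I(U_1;U_2|Q_\mathrm{sym},U_0)$ term vanishes. Second, because $V_0=V_1=V_2$, conditioning on $V_0$ determines $V_1$; hence $I(U_0,U_1,U_2;V_1|Q_\mathrm{sym},V_0,Y_1,S)=0$ and each joint term $I(\,\cdot\,;V_0,V_1|\cdots)$ collapses to $I(\,\cdot\,;V_0|\cdots)$. Third, determinism gives $Y_1=y_1(X,S)$ and $Y_2=y_2(X,S)$, so any mutual information of the form $I(\,\cdot\,;Y_1,Y_2|X,S)$ is zero. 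After these reductions the first bracket becomes $I(U_1;Y_1,V_1|Q_\mathrm{sym},U_0,S)+0.5I(U_0;Y_1,V_1|Q_\mathrm{sym},S)-0.5I(U_0,U_1,U_2;V_0|Q_\mathrm{sym},Y_1,S)$ and the second becomes $I(U_0,U_1;Y_1,V_1|Q_\mathrm{sym},S)-I(U_0,U_1,U_2;V_0|Q_\mathrm{sym},Y_1,S)$.

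Next I would evaluate each surviving term by splitting over $Q_\mathrm{sym}\in\{1,2\}$, each with probability $0.5$. For $Q_\mathrm{sym}=1$ one has $X=U_1$ and $V_1=V_0=Y_2$, while for $Q_\mathrm{sym}=2$ one has $X=U_2$ and $V_1=V_0=Y_1$; using the exchangeability of $U_1$ and $U_2$ given $U_0$, the two cases are written uniformly in terms of one generic $X$ with law $p(u_0)p(x|u_0)$. In the case $Q_\mathrm{sym}=2$ the ``inactive'' auxiliary is independent of the active output given $(U_0,S)$, so terms such as $I(U_1;Y_1|U_0,S)$ vanish. This yields, for example, $I(U_1;Y_1,V_1|Q_\mathrm{sym},U_0,S)=0.5\,I(X;Y_1,Y_2|U_0,S)$ and $I(U_0,U_1,U_2;V_0|Q_\mathrm{sym},Y_1,S)=0.5\,I(X;Y_2|Y_1,S)$, with the remaining pieces handled the same way.

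Finally I would collapse the resulting sums using the chain rule together with the Markov relation $U_0\to X\to(Y_1,Y_2)$ given $S$. The two identities that do the work are $I(X;Y_1|U_0,S)+I(U_0;Y_1|S)=I(X;Y_1|S)$ and $I(X;Y_2|Y_1,U_0,S)=I(X;Y_2|Y_1,S)-I(U_0;Y_2|Y_1,S)$. The first recovers the common leading term $0.5\,I(X;Y_1|S)$ in both brackets; the second, applied to the residual $Y_2$-terms of the first bracket, cancels the $I(U_0;Y_2|Y_1,S)$ and $I(X;Y_2|Y_1,S)$ contributions down to $0.25\,I(X;Y_2|Y_1,U_0,S)$, while the second bracket reduces directly to $0.5\,I(U_0;Y_1|S)$. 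I expect the main obstacle to be exactly this last bookkeeping step for the first bracket: the asymmetric roles of $U_1$ and $U_2$ under the two values of $Q_\mathrm{sym}$ must be tracked carefully, and the $Y_2$-terms only telescope to the coefficient $0.25$ once the chain-rule identity is applied in the right order. With both brackets matched to the two expressions in the statement, the outer maximization reduces to $\max_{p(u_0,x)}$ of their minimum, which is precisely the claimed $R_\mathrm{sym-sp}$.
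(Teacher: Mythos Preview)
Your proposal is correct and is exactly the approach the paper takes: the paper's entire proof is the one-line statement that the proposition ``is obtained by substituting from~\eqref{sp:cond1} and~\eqref{sp:condend} into~\eqref{thmeq:2},'' and you have simply carried out that substitution in full detail. Your reductions (conditional independence of $U_1,U_2$ given $U_0$, the collapse from $V_0=V_1=V_2$, the $Q_\mathrm{sym}$-split, and the chain-rule/Markov identities $I(X;Y_1|U_0,S)+I(U_0;Y_1|S)=I(X;Y_1|S)$ and $I(X;Y_2|Y_1,S)=I(U_0;Y_2|Y_1,S)+I(X;Y_2|Y_1,U_0,S)$) are all valid and lead cleanly to the stated expression.
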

\begin{IEEEproof} This proposition is obtained by substituting from~\eqref{sp:cond1} and~\eqref{sp:condend} into~\eqref{thmeq:2}.
\end{IEEEproof}

We now introduce a new example of a symmetric deterministic broadcast channel with stale state for which the superposition coding scheme outperforms the time-sharing scheme. 

\begin{example}[Blackwell Channel with State]
\textnormal{Consider the symmetric DM-BC with random state depicted in Figure~\ref{fig:bc-blackwell}, where $p_S(1)=p_S(2)=0.5$.}
\begin{figure}[h]
\begin{center}
\psfrag{0}[r]{$0$}
\psfrag{2}[r]{$1$}
\psfrag{1}[r]{$2$}
\psfrag{z}[l]{$0$}
\psfrag{o}[l]{$1$}
\psfrag{X}[c]{$X$}
\psfrag{Y1}[c]{$Y_1$}
\psfrag{Y2}[c]{$Y_2$}
\psfrag{s1}[c]{$S=1$}
\psfrag{s2}[c]{$S=2$}
\includegraphics[scale=0.5]{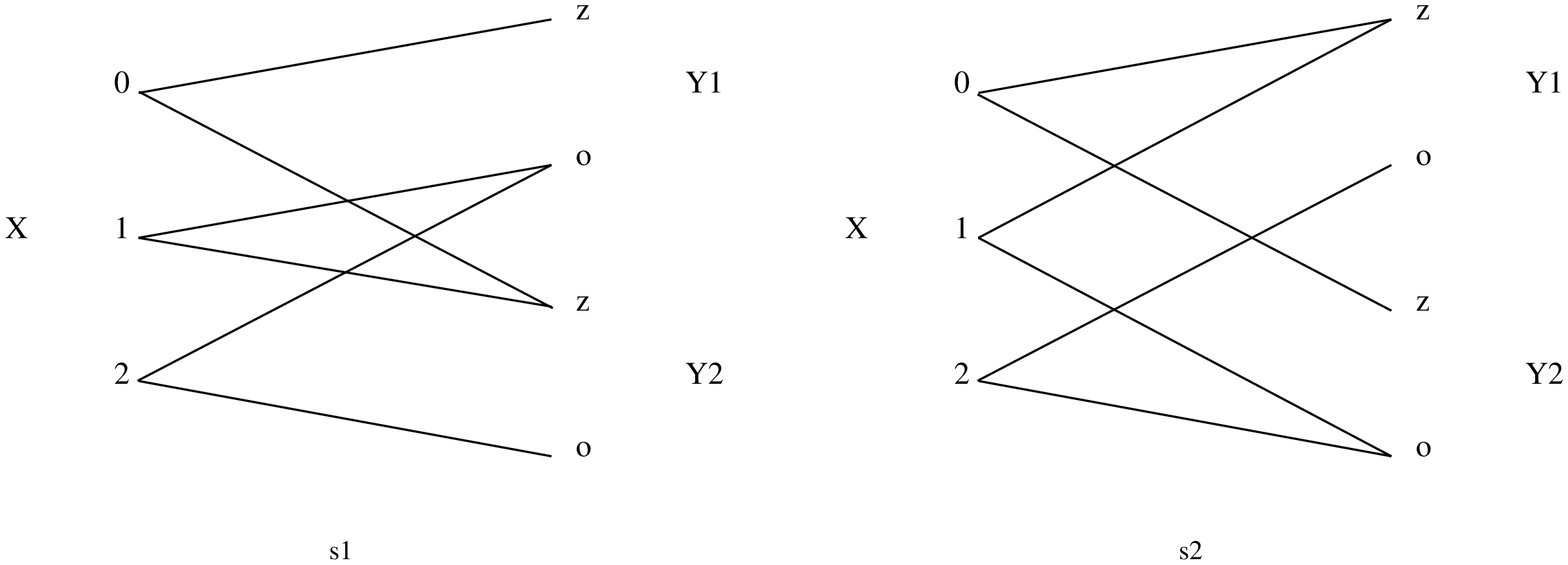}
\caption{Blackwell channel with state.}
\label{fig:bc-blackwell}
\end{center}
\end{figure}

\textnormal{We first evaluate $R_\mathrm{sym-ts}$. Let $U_0\sim \Bern(0.5)$ and $U_1$ and $U_2$ be independently and identically distributed according to $p_{U_1}(0)=p_0,\; p_{U_1}(2)=p_2,\; p_{U_1}(1)=1-p_0-p_2$. We numerically maximize the expression for the maximum symmetric rate in Proposition~\ref{prop:1} in $(p_0,p_2)$ to obtain
\begin{align}
R_\mathrm{sym-ts} &=\max_{p_0,p_2}\frac{H(p_0,1-p_0-p_2,p_2)}{2+0.5(1-p_0)H_b(p_2/(1-p_0))+0.5(1-p_2)H_b(p_0/(1-p_2))}\nonumber\\
				&=0.5989
\end{align}
for $p_0^*=p_2^*=0.37325$. Here, $H(p_0,1-p_0-p_2,p_2)$ is the entropy of $U_1$.}

\textnormal{We now show that superposition coding can do better. Let $U_0 \in \{0,1,2,3\}$ and 
$p_{U_0}(0)=p_{U_0}(1)=q_1, p_{U_0}(2)=p_{U_0}(3)=(1-2q_1)/2, 0 \leq q_1 \leq 0.5$ Let
\begin{align}
p_{U_1|U_0}(u|u_0)=p_{U_2|U_0}(u|u_0)=
	\begin{cases}
	\alpha_1(1-\beta_1)  & \textrm{if } (u,u_0)=(0,0) \text{ or } (1,1),\\
	\beta_1 & \textrm{if } (u,u_0)=(2,0) \text{ or }(2,1),\\
	(1-\alpha_1)(1-\beta_1) & \textrm{if } (u,u_0)=(1,0) \text{ or } (0,1),\\
	\alpha_2(1-\beta_2)  & \textrm{if } (u,u_0)=(0,2) \text{ or } (1,3),\\
	\beta_2 & \textrm{if } (u,u_0)=(2,2) \text{ or }(2,3),\\
	(1-\alpha_2)(1-\beta_2) & \textrm{if } (u,u_0)=(1,2) \text{ or } (0,3),
	\end{cases}
\end{align}
and choose $(V_0,V_1,V_2)$ and $X$ as in (\ref{sp:cond1})-(\ref{sp:condend}).}

\textnormal{Maximizing the symmetric rate in Proposition~\ref{prop:2} over $q_1$ and $0 \leq \a_1, \a_2, \b_1, \b_2 \leq 1$, we obtain $R_\mathrm{sym-sp} \geq 0.6103$ at $q_1^*=0.5$, $\alpha_1^*=0.13628$ and $\beta_1^*=0.23025$, which is greater than the symmetric rate achieved using the time-sharing scheme.}

\textnormal{To investigate the optimality of the achievable symmetric rate using superposition coding, we consider the same physically degraded broadcast channel with state in~\cite{Ali--Tse2010} with input $X$ and outputs $(Y_1,Y_2)$ and $Y_2$. The capacity region of this channel is the set of rate pairs $(R_1,R_2)$ such that}
\begin{align}\begin{split}\label{degraded}
R_1 &\leq I(X;Y_1,Y_2|S,U),\\
R_2 &\leq I(U;Y_2|S),
\end{split}\end{align}
\textnormal{where $|\Ucal | \le \min \{ |\Xcal |, |\Ycal_1| |\Scal |, |\Ycal_2| |\Scal | \}+1$.}

\textnormal{Hence, the symmetric capacity is upper bounded as
\begin{align}
C_\mathrm{sym} \le \max_{p(u)p(x|u), |\mathcal{U}| \leq 4} \min\{ I(U;Y_2|S), I(X;Y_1,Y_2|S,U)\}.\label{up1}
\end{align}}

\textnormal{Now we show that this upper bound is strictly less than $2/3$ and is greater than $0.653$. We show that the upper bound is greater than $0.653$ numerically using the substitutions: $U\sim \Bern(0.5), p_{X|U}(0|0)=p_{X|U}(1|1)=0.832,  p_{X|U}(2|0)=p_{X|U}(2|1)=0.168, p_{X|U}(1|0)=p_{X|U}(0|1)=0.$
Thus, the upper bound in~\eqref{up1} is greater than the inner bound using the superposition coding scheme of $R_\mathrm{sym-sp} \geq 0.6103$.}

\textnormal{To show that the upper bound in~\eqref{up1} is less than $2/3$, consider
\begin{align*}
\max_{p(u,x)} \min\{ I(U;Y_2|S), I(X;Y_1,Y_2|S,U)\} &\stackrel{(a)}{=} \max_{p(u,x)}\min \{H(Y_2|S)-H(Y_2|S,U), H(Y_1,Y_2|S,U) \} \\
&\stackrel{(b)}{\leq} \max_{p(u,x)}\min \{H(Y_2|S)-H(Y_2|S,U), 2H(Y_2|S,U) \}\\
&\stackrel{(c)}{\leq} \max_{p(x)}\frac{2H(Y_2|S)}{3} \\
&\stackrel{(d)}{\leq} \frac{2}{3},
\end{align*}}\textnormal{where $(a)$ holds because the Blackwell channel with state is deterministic and $(b)$ holds since $H(Y_1,Y_2|S,U) \leq H(Y_1|S,U)+H(Y_2|S,U)=2H(Y_2|S,U)$. Step $(c)$ can be shown as follows. Suppose $H(Y_2|S)-H(Y_2|S,U) > 2H(Y_2|S)/3$,  then $2H(Y_2|S,U) < 2H(Y_2|S)/3$. Therefore at least one of the two terms is less than or equal to $2/3 H(Y_2|S)$. Step $(d)$ holds since $|\Ycal_2|=2$, and equality holds iff $Y_2 \sim \Bern(0.5)$.
Now suppose equality holds for $(b),(c)$, and $(d)$, and then, from equality for $(d)$, $Y_2 \sim \Bern(0.5)$, which implies that $X=Y_1=Y_2 \sim \Bern(0.5)$. Then, from the equality for $(c)$, $H(X|S,U)=1/3$ and from equality for $(b)$, $H(X|S,U)=2H(X|S,U)=0$, which is contradiction. Thus, equality cannot hold for $(b),(c)$, and $(d)$. We conclude that $C_\mathrm{sym} < 2/3$.}



\end{example}
\section{Conclusion}\label{conc}
We derived a simplified expression for the maximum symmetric rate achievable using the Shayevitz--Wigger scheme for the symmetric broadcast channel with random state when the state is known at the receivers and only strictly causally at the transmitter. We considered a time-sharing special case of the Shayevitz--Wigger scheme and showed that it attains the symmetric capacity of the MAT examples. We then introduced the Blackwell channel with state example and showed that a superposition coding special case of the Shayevitz--Wigger scheme can achieve a higher symmetric rate than the time-sharing scheme. 

There are many open questions that would be interesting to explore further, including the following.
\begin{itemize}
\item We showed that the time-sharing scheme is not optimal for the class of deterministic channels as defined in Section~\ref{sect:2}.  For what general class of channels is it optimal?
\item Is the symmetric rate  achieved using the superposition coding scheme for the Blackwell channel with state example optimal? Can a higher symmetric rate be achieved using Marton coding?
\item For what general class of channels is the symmetric rate achieved using the Shayevitz--Wigger scheme optimal?
\end{itemize}


\appendices

\appendix
\section*{Proof of Theorem~\ref{thm:2}}

Suppose $R_\mathrm{sum}$ is achievable with a set of auxiliary random variables $(U_0,U_1,U_2,V_0,V_1,V_2,Q)$ with $Q \in [1:N]$ and a function $X=x(U_0,U_1,U_2,Q)$. Then,
\begin{align}
\begin{split}\label{sum2}
  R_\mathrm{sum}=\min\{&I(U_1;Y_1,V_1|Q,U_0,S)+I(U_2;Y_2,V_2|Q,U_0,S) + \min_{i \in \{1,2\}}I(U_0;Y_i,V_i|Q,S)\\
  &-I(U_1;U_2|Q,U_0)-I(U_0,U_1,U_2;V_1|Q,V_0,Y_1,S)-I(U_0,U_1,U_2;V_2|Q,V_0,Y_2,S)\\
&-\max_{i \in \{1,2\}}I(U_0,U_1,U_2;V_0|Q,Y_i,S), I(U_0,U_1;Y_1,V_1|Q,S)+I(U_0,U_2;Y_2,V_2|Q,S)\\
&-I(U_1;U_2|Q,U_0) -I(U_0,U_1,U_2;V_0,V_1|Q,Y_1,S)-I(U_0,U_1,U_2;V_0,V_2|Q,Y_2,S)\}.
\end{split}
\end{align}

Now we show that this sum-rate is also achievable with a set of symmetric auxiliary random variables.

We first construct the following set of auxiliaries and function:
\begin{align}\begin{split}\label{Qprime}
&Q' \in [N+1:2N],\\
&p_{Q'}(q) =p_Q(q-N) \text{ for } q \in [N+1:2N],\\
&p_{U_0',U_1',U_2',V_0',V_1',V_2'|S,Q'}(u_0,u_1,u_2,v_0,v_1,v_2|s,q)=p_{U_0,U_1,U_2,V_0,V_1,V_2|S,Q}(u_0,u_2,u_1,v_0,v_2,v_1|\pi(s),q-N),\\
&X=x(U_0',U_2',U_1',Q'-N).\end{split}
\end{align}
Note that the following equalities hold.
\begin{align}
\begin{split}\label{symmetry}
I(U_0',U_1';Y_1,V_1'|Q',S)&=I(U_0,U_2;Y_2,V_2|Q,S),\\
I(U_0',U_1',U_2';V_0',V_1'|Q',Y_1,S)&=I(U_0,U_1,U_2;V_0,V_2|Q,Y_2,S),\\
I(U_1';Y_1,V_1'|Q',U_0,S)&=I(U_2;Y_2,V_2|Q,U_0,S),\\
I(U_0';Y_1',V_1'|Q',S)&=I(U_0;Y_2,V_2|Q,S),\\
I(U_1';U_2'|Q',U_0')&=I(U_1;U_2|Q,U_0),\\
I(U_0',U_1',U_2';V_1'|Q',V_0',Y_1,S)&=I(U_0,U_1,U_2;V_2|Q,V_0,Y_2,S),\\
I(U_0',U_1',U_2';V_0'|Q',Y_1,S)&=I(U_0,U_1,U_2;V_0|Q,Y_2,S).
\end{split}
\end{align}
We prove the first equality in~\eqref{symmetry}. 
\begin{align*}
I(U_0',U_1';Y_1,V_1'|Q',S)&=\sum_{s}\sum_{q=N+1}^{2N}p_S(s)p_{Q'}(q)I(U_0',U_1';Y_1,V_1'|Q'=q,S=s)\\
&=\sum_{\pi(s)}\sum_{q=N+1}^{2N}p_S(\pi(s))p_Q(q-N)I(U_0,U_2;Y_2,V_2|Q=q-N,S=\pi(s))\\
&=\sum_{s}\sum_{q=1}^{N}p_S(s)p_Q(q)I(U_0,U_2;Y_2,V_2|Q=q,S=s)\\
&=I(U_0,U_2;Y_2,V_2|Q,S).
\end{align*}
The rest of the equalities in~\eqref{symmetry} can be proved in a similar manner.

Now we compose a new set of auxiliaries that ``time-share" between $Q$ and $Q'$. Let
\begin{align}
Q_\mathrm{sym}= \begin{cases}
	Q  & \textrm{ with probability } 0.5, 	\nonumber\\
	Q'  & \textrm{ with probability } 0.5. 	
	\end{cases}
\end{align}

It follows from~\eqref{symmetry} that,  for $i \in \{1,2\},$
\begin{align}
\begin{split}\label{symmetry2nd}
I(U_0,U_i;Y_i,V_i|Q_\mathrm{sym},S)&=0.5\sum_{i=1,2}I(U_0,U_i;Y_i,V_i|Q,S),\\
I(U_i;Y_i,V_i|Q_\mathrm{sym},U_0,S)&=0.5\sum_{i=1,2}I(U_i;Y_i,V_i|Q,U_0,S),\\
I(U_0;Y_i,V_i|Q_\mathrm{sym},S)&=0.5\sum_{i=1,2}I(U_0;Y_i,V_i|Q,S),\\
I(U_1;U_2|Q_\mathrm{sym},U_0)&=I(U_1;U_2|Q,U_0),\\
I(U_0,U_1,U_2;V_i|Q_\mathrm{sym},V_0,Y_i,S)&=0.5\sum_{i=1,2}I(U_0,U_1,U_2;V_i|Q,V_0,Y_i,S),\\
I(U_0,U_1,U_2;V_0|Q_\mathrm{sym},Y_i,S)&=0.5\sum_{i=1,2}I(U_0,U_1,U_2;V_0|Q,Y_i,S).
\end{split}
\end{align}

The sum-rate achievable with $Q_\mathrm{sym}$ is
\begin{align}
\begin{split}\label{symsum}
  \mathrm{SR}= \min \{&I(U_1;Y_1,V_1|Q_\mathrm{sym},U_0,S)+I(U_2;Y_2,V_2|Q_\mathrm{sym},U_0,S) + \min_{i \in \{1,2\}}I(U_0;Y_i,V_i|Q_\mathrm{sym},S) \\
&\qquad~~~-I(U_1;U_2|Q_\mathrm{sym},U_0)-I(U_0,U_1,U_2;V_1|Q_\mathrm{sym},V_0,Y_1,S)-I(U_0,U_1,U_2;V_2|Q_\mathrm{sym},V_0,Y_2,S)\\
&\qquad~~~-\max_{i \in \{1,2\}}I(U_0,U_1,U_2;V_0|Q_\mathrm{sym},Y_i,S),\\
&I(U_0,U_1;Y_1,V_1|Q_\mathrm{sym},S)+I(U_0,U_2;Y_2,V_2|Q_\mathrm{sym},S)-I(U_1;U_2|Q_\mathrm{sym},U_0)\\
&\qquad~~~-I(U_0,U_1,U_2;V_0,V_1|Q_\mathrm{sym},Y_1,S)-I(U_0,U_1,U_2;V_0,V_2|Q_\mathrm{sym},Y_2,S)\}.
\end{split}
\end{align}


We now show that $\mathrm{SR} \geq R_\mathrm{sum}$. By using~\eqref{symmetry2nd}, it can be easily shown that the second term inside the minimum in~\eqref{symsum} is the same as the second term inside the minimum in~\eqref{sum2}. We now show that the first term inside the minimum in \eqref{symsum} is greater than or equal to the first term inside the minimum in \eqref{sum2}. We start with the first term in \eqref{symsum}.
{\allowdisplaybreaks
\begin{align*}
&I(U_1;Y_1,V_1|Q_\mathrm{sym},U_0,S)+I(U_2;Y_2,V_2|Q_\mathrm{sym},U_0,S) + \min_{i \in \{1,2\}}I(U_0;Y_i,V_i|Q_\mathrm{sym},S)-I(U_1;U_2|Q_\mathrm{sym},U_0)\nonumber\\
&\qquad~-I(U_0,U_1,U_2;V_1|Q_\mathrm{sym},V_0,Y_1,S)-I(U_0,U_1,U_2;V_2|Q_\mathrm{sym},V_0,Y_2,S)-\max_{i \in \{1,2\}}I(U_0,U_1,U_2;V_0|Q_\mathrm{sym},Y_i,S)\\
& \stackrel{(a)}{=}I(U_1;Y_1,V_1|Q,U_0,S)+I(U_2;Y_2,V_2|Q,U_0,S) + 0.5\sum_{i=1,2}I(U_0;Y_i,V_i|Q,S)-I(U_1;U_2|Q,U_0)\\
&\qquad~-I(U_0,U_1,U_2;V_1|Q,V_0,Y_1,S)-I(U_0,U_1,U_2;V_2|Q,V_0,Y_2,S)-0.5\sum_{i=1,2}I(U_0,U_1,U_2;V_0|Q_\mathrm{sym},Y_i,S)\\
&\geq I(U_1;Y_1,V_1|Q,U_0,S)+I(U_2;Y_2,V_2|Q,U_0,S) + \min_{i \in 1,2} I(U_0;Y_i,V_i|Q,S)-I(U_1;U_2|Q,U_0)\\
&\qquad~-I(U_0,U_1,U_2;V_1|Q,V_0,Y_1,S)-I(U_0,U_1,U_2;V_2|Q,V_0,Y_2,S)-\max_{i \in \{1,2\}}I(U_0,U_1,U_2;V_0|Q,Y_i,S),
\end{align*}
where $(a)$ holds from~\eqref{symmetry2nd}. 
Therefore, the maximum sum-rate is achievable with symmetric auxiliary random variables.}

By lemma~\ref{lem:1}, $R_\mathrm{sym}$ can be written as
\begin{align}
\begin{split}\label{fromLemma}
R_\mathrm{sym}=& \frac{1}{2}\max \min \{\sum_{i=1,2}I(U_i;Y_i,V_i|Q_\mathrm{sym},U_0,S)+ \min_{i \in \{1,2\}}I(U_0;Y_i,V_i|Q_\mathrm{sym},S)-I(U_1;U_2|Q_\mathrm{sym},U_0)\\
&-\sum_{i=1,2}I(U_0,U_1,U_2;V_i|Q_\mathrm{sym},V_0,Y_i,S)-\max_{i \in \{1,2\}}I(U_0,U_1,U_2;V_0|Q_\mathrm{sym},Y_i,S),\\
&\sum_{i=1,2} I(U_0,U_i;Y_i,V_i|Q_\mathrm{sym},S)-I(U_1;U_2|Q_\mathrm{sym},U_0)-\sum_{i=1,2}I(U_0,U_1,U_2;V_0,V_i|Q_\mathrm{sym},Y_i,S)\},
\end{split}
\end{align}
where the maximization is over symmetric auxiliaries and functions satisfying the structure in Corollary~\ref{coro:1}.
Using \eqref{symmetry2nd}, $R_\mathrm{sym}$ can be further simplified to 
\begin{align*}
R_\mathrm{sym}=&\max\min \{I(U_1;Y_1,V_1|Q_\mathrm{sym},U_0,S)+0.5I(U_0;Y_1,V_1|Q_\mathrm{sym},S)-0.5I(U_1;U_2|Q_\mathrm{sym},U_0)\\
&-I(U_0,U_1,U_2;V_1|Q_\mathrm{sym},V_0,Y_1,S)-0.5I(U_0,U_1,U_2;V_0|Q_\mathrm{sym},Y_1,S),\\ &I(U_0,U_1;Y_1,V_1|Q_\mathrm{sym},S)-0.5I(U_1;U_2|Q_\mathrm{sym},U_0)-I(U_0,U_1,U_2;V_0,V_1|Q_\mathrm{sym},Y_1,S)\}.
\end{align*}

\bibliographystyle{IEEEtran}
\bibliography{bcstate}

\begin{thebibliography}{1}
\providecommand{\url}[1]{#1}
\csname url@samestyle\endcsname
\providecommand{\newblock}{\relax}
\providecommand{\bibinfo}[2]{#2}
\providecommand{\BIBentrySTDinterwordspacing}{\spaceskip=0pt\relax}
\providecommand{\BIBentryALTinterwordstretchfactor}{4}
\providecommand{\BIBentryALTinterwordspacing}{\spaceskip=\fontdimen2\font plus
\BIBentryALTinterwordstretchfactor\fontdimen3\font minus
  \fontdimen4\font\relax}
\providecommand{\BIBforeignlanguage}[2]{{%
\expandafter\ifx\csname l@#1\endcsname\relax
\typeout{** WARNING: IEEEtran.bst: No hyphenation pattern has been}%
\typeout{** loaded for the language `#1'. Using the pattern for}%
\typeout{** the default language instead.}%
\else
\language=\csname l@#1\endcsname
\fi
#2}}
\providecommand{\BIBdecl}{\relax}
\BIBdecl

\bibitem{El-Gamal--Kim2010}
A.~El~Gamal and Y.~H. Kim, \emph{Network Information Theory}, 1st~ed.\hskip 1em
  plus 0.5em minus 0.4em\relax Cambridge University Press, 2011.

\bibitem{Shayevitz--Wigger2013}
O.~Shayevitz and M.~Wigger, ``On the capacity of the discrete memoryless
  broadcast channel with feedback,'' \emph{{IEEE} Trans. Inf. Theory}, vol.~59,
  no.~3, pp. 1329--1345, 2013.

\bibitem{Dueck1980}
G.~Dueck, ``Partial feedback for two-way and broadcast channels,'' \emph{Inf.
  Control}, vol.~46, no.~1, pp. 1--15, 1980.

\bibitem{Gray--Wyner1974}
R.~M. Gray and A.~D. Wyner, ``Source coding over simple networks,'' \emph{Bell
  System Tech. J.}, vol.~53, no.~9, pp. 1681--1721, 1974.

\bibitem{Ali--Tse2010}
M.~Maddah-Ali and D.~Tse, ``Completely stale transmitter channel state
  information is still very useful,'' in \emph{Information Theory, IEEE
  Transactions on}, vol.~58, no.~7, 2012, pp. 4418--4431.

\bibitem{blackwell}
D.~Blackwell, L.~Breiman, and A.~J. Thomasian, ``Proof of shannon's
  transmission theorem for finite-state indecomposable channels,'' \emph{The
  Annals of Mathematical Statistics}, vol.~29, no.~4, pp. pp. 1209--1220, Dec.
  1958.

\bibitem{Georgiadis--Tassiulas2009}
G.~L. and T.~L., ``Broadcast erasure channel with feedback - capacity and
  algorithms,'' in \emph{Workshop on Network Coding, Theory, and Applications,
  Lausanne}, 2009, pp. 54 -- 61.

\bibitem{1683207}
P.~Larsson and N.~Johansson, ``Multi-user arq,'' in \emph{Vehicular Technology
  Conference, 2006. VTC 2006-Spring. IEEE 63rd}, vol.~4, 2006, pp. 2052--2057.

\end{thebibliography}


\end{document}